\renewcommand{\L}{\Lambda}
\newcommand{\T}{\Theta}
\newcommand{\Om}{\Omega}
\newcommand{\oq}{\ {\raise 7pt\hbox{${\scriptstyle\circ}$}}
\kern -7pt{
\hbox{$Q$}}}
\newcommand{\R}{ \mathbb R}
\newcommand{\N}{ \mathbb N}
\newcommand {\GD}{\mathfrak D}
\newcommand {\ba}{\mathbf a}
\newcommand {\bb}{\mathbf b}
\newcommand {\BD}{\mathbf D}
\newcommand {\BG}{\mathbf G}
\newcommand {\BF}{\mathbf F}
\newcommand {\BI}{\mathbf I}
\newcommand {\bx}{\mathbf x}
\newcommand {\bk}{\mathbf k}
\newcommand {\bm}{\mathbf m}
\newcommand {\bl}{\mathbf l}
\newcommand {\bn}{\mathbf n}
\newcommand {\bnu}{\boldsymbol\nu}
\newcommand {\boldeta}{\boldsymbol\eta}
\newcommand {\bxi}{\boldsymbol\xi}
\newcommand{\lu}{\langle}
\newcommand{\ru}{\rangle}
\newcommand{\CB}{\mathcal B}
\newcommand{\CA}{\mathcal A}
\newcommand{\CE}{\mathcal E}
\newcommand{\CD}{\mathcal D}
\newcommand{\1}
{{\,\vrule depth3pt height9pt}{\vrule depth3pt height9pt}
{\vrule depth3pt height9pt}{\vrule depth3pt height9pt}\,}
\DeclareMathOperator \volume {{vol}}
\newtheorem{thm}{Theorem}[section]
\newtheorem{cor}[thm]{Corollary}
\newtheorem{lem}[thm]{Lemma}
\newtheorem{prop}[thm]{Proposition}
\theoremstyle{definition}
\theoremstyle{remark}
\newtheorem{rem}[thm]{Remark}
\numberwithin{equation}{section}
\newcommand{\bee}{\begin{equation}}
\newcommand{\ene}{\end{equation}}
\newcommand{\bees}{\begin{equation*}}
\newcommand{\enes}{\end{equation*}}
\newcommand{\bes}{\begin{split}}
\newcommand{\ens}{\end{split}}
\newcommand{\bet}{\begin{thm}}
\newcommand{\ent}{\end{thm}}
\newcommand{\bel}{\begin{lem}}
\newcommand{\enl}{\end{lem}}
\newcommand{\bec}{\begin{cor}}
\newcommand{\enc}{\end{cor}}
\newcommand{\bep}{\begin{proof}}
\newcommand{\enp}{\end{proof}}
\newcommand{\ber}{\begin{rem}}
\newcommand{\enr}{\end{rem}}
\newcommand{\ep}{\varepsilon}
\newcommand{\la}{\lambda}
\newcommand{\de}{\delta}
\newcommand{\Z}{\mathbb Z}
\newcommand{\De}{\Delta}
\newcommand{\CF}{\mathcal F}
\def\square{\RIfM@\bgroup\else$\bgroup\aftergroup$\fi
  \vcenter{\hrule\hbox{\vrule\@height.6em\kern.6em\vrule}\hrule}\egroup}
\newcommand{\Ld}{\Lambda^\dagger}
\newcommand{\Omd}{\Omega^\dagger}
\DeclareMathOperator{\rme}{\mathrm e}
\DeclareMathOperator{\rmd}{\mathrm d\!}
\newcommand{\rhs}{$\mathrm{r.\,h.\,s.}$ }
\begin{document}

\title[Lower bound on DOS
(\the\day.\the\month.\the\year)]
{Lower bound on the density of states for periodic Schr\"odinger operators}
\author[S. Morozov et al. (\the\day.\the\month.\the\year)]{Sergey Morozov \and Leonid Parnovski \and Irina Pchelintseva}
\address{Department of Mathematics\\ University College London\\
Gower Street\\ London\\ WC1E 6BT\\ UK}

\begin{abstract}
We consider Schr\"odinger operators $-\Delta+V$ in $\R^d$ ($d\geqslant 2$) with smooth
periodic potentials $V$ and prove a uniform lower bound on the density of states for large values of the spectral parameter.
\end{abstract}

\maketitle

\section{Introduction}

Let $H=-\De+V$ be a Schr\"odinger operator in $L_2(\R^d)$ with a smooth periodic potential
$V$. We will assume throughout that $d\geqslant 2$.
The {\em integrated density of states (IDS)} for $H$ is defined as
\begin{equation}\label{IDS}
N(\lambda):= \lim_{L\to \infty}L^{-d}N(\lambda; H_D^{(L)}), \quad \lambda\in\mathbb R.
\end{equation}
Here $H_D^{(L)}$ is the restriction of $H$ to the cube $[0, L]^d$ with the Dirichlet boundary conditions, and $N(\lambda; \cdot)$ is the counting function of the discrete spectrum below $\lambda$. For $H_0:= -\Delta$ the IDS can be easily computed explicitly (e.g. using the representation \eqref{densityofstates} below):
\begin{equation}\label{Laplacian IDS}
N_0(\lambda)= (2\pi)^{-d}d^{-1}\omega_d\lambda^{d/2}.
\end{equation}
Here $\omega_d= 2\pi^{d/2}/\Gamma(d/2)$ is the surface area of the unit sphere $S^{d- 1}$ in $\mathbb R^d$.

The asymptotic behaviour of the function \eqref{IDS} for large values of the spectral parameter was recently studied in a number of publications, see \cite{Karpeshina2000}, \cite{ParnovskiShterenberg2009}, and references therein.

Our article concerns the high--energy behaviour of the Radon--Ni\-ko\-dym derivative of IDS $$g:= dN/d\lambda,$$ which is called the {\em density of states (DOS)} (see \cite{ReedSimon1978}). Our main result is that for big values of $\lambda$
\begin{equation}\label{bound on g}
g(\lambda)\geqslant g_0(\lambda)\big(1- o(1)\big), 
\end{equation}
where
\begin{equation*}
g_0(\lambda)= dN_0(\lambda)/d\lambda= (2\pi)^{-d}\omega_d\lambda^{(d- 2)/2}/2.
\end{equation*}
We remark that \eqref{bound on g} should be understood in the sense of measures; in particular, we do not claim that $g(\lambda)$ is everywhere differentiable.

It has been proved in \cite{Parnovski2008} that the spectrum of $H$ contains a semi-axis
$[\la_0,+\infty)$; this statement is known as the Bethe-Sommerfeld conjecture
(see the references in \cite{Parnovski2008} for the history of this problem). This result
has an obvious reformulation in terms of IDS: each point
$\la\geqslant\la_0$ is a point of growth of $N$. It was also proved in \cite{Parnovski2008}
that for each $n\in\N$ and $\varepsilon=\la^{-n}$ we have 
\bee\label{eq:1}
N(\la+\varepsilon)- N(\la)\ll\varepsilon\lambda^{(d- 2)/2}. 
\ene
Later, when the second author discussed the results and methods of \cite{Parnovski2008}
with Yu. Karpeshina, she suggested that using the technique from that paper, one should be able to
prove the opposite bound 
\bee\label{eq:2}
N(\la+\varepsilon)- N(\la)\gg\varepsilon\lambda^{(d- 2)/2}
\ene
when $\la$ is sufficiently large,
not just with $\varepsilon=\la^{-n}$ (when the proof is relatively straightforward given 
\cite{Parnovski2008}), but also uniformly over all $\ep\in (0,1]$. In this paper we prove that for big $\lambda$
\begin{equation}\label{IDS estimate}
N(\lambda+ \varepsilon)- N(\lambda)\geqslant \frac{\omega_d}{2(2\pi)^d}\varepsilon\lambda^{(d- 2)/2}\big(1- o(1)\big).
\end{equation}
Note that \eqref{IDS estimate} implies the claimed bound \eqref{bound on g}.

The proof of \eqref{IDS estimate} is heavily based on the technique of \cite{Parnovski2008}
and uses various statements proved therein. In order to minimise the size of our paper, we will try to quote as many results as we can from
\cite{Parnovski2008}, possibly with some minor modifications when necessary.

{\bf Acknowledgement.}
As we have already mentioned, this paper is a result of observations and suggestions
made by Yu. Karpeshina; we are very grateful to her for sharing them with us and allowing
us to use them. The authors were supported by the EPSRC grant EP/F029721/1.

\section{Preliminaries}\label{preliminaries section}

We study the Schr\"odinger operator
\bee\label{Schroedinger1}
H= -\De+V(\bx), \quad \bx\in\R^d
\ene
with the potential $V$ being infinitely smooth and periodic with the lattice of periods $\L$.
We denote the lattice dual to $\L$ by $\Ld$, fundamental cells of these lattices are denoted by $\Om$ and $\Omd$, respectively. We choose $\Omd$ to be the first Brillouin zone and introduce
\begin{equation}\label{Brillouin radius}
Q:= \sup\big\{|\bxi|\big| \bxi\in\Omd\big\}.
\end{equation}
Let
\begin{equation}\label{Ds}
\BD:= -i\nabla, \quad \BD(\bk):= \BD+ \bk.
\end{equation}

The Floquet-Bloch decomposition allows to represent our operator \eqref{Schroedinger1} as a direct integral (see e.g. \cite{ReedSimon1978}):
\bee\label{directintegral}
H=\int_{\Omd}\oplus H(\bk)\rmd\bk,
\ene
where
\begin{equation}\label{skewed operators}
H(\bk)=\BD(\bk)^2 +V(\bx)
\end{equation}
is the family of `fibre' operators acting in
$L_2(\Om)$.
The domain of each $H(\bk)$ is the set of periodic functions from $H^2(\Om)$.
The spectrum of $H$ is the union over $\bk\in\Omd$ of the spectra of the operators \eqref{skewed operators}.

We denote by $|\cdot|_\circ$ the surface area Lebesgue measure on the unit sphere $S^{d- 1}$ in $\mathbb R^d$ and put $\omega_d:= |S^{d- 1}|_\circ= 2\pi^{d/2}/\Gamma(d/2)$.
Finally,
\bee\label{densityofstates}
N(\la):= (2\pi)^{-d}\int_{\Omd}\# \big\{j:\,\la_j(\bk)<\la\big\}\rmd\bk
\ene
is the {\it integrated density of states} of the operator \eqref{Schroedinger1}. It is known (see e.g. \cite{ReedSimon1978}) that the definitions \eqref{IDS} and \eqref{densityofstates} are equivalent.

The main result of the paper is

\begin{thm}\label{maintheorem2}
For sufficiently big $\lambda$ and any $\varepsilon> 0$ the integrated density of states of $H$ satisfies \eqref{IDS estimate}.
\end{thm}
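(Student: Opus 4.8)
The plan is to reduce \eqref{IDS estimate}, via the gauge transform of \cite{Parnovski2008}, to a lower bound for the Lebesgue measure of a thin spherical shell in $\bxi$-space, and then to bound that measure below by the co-area formula rather than by localising individual eigenvalues; it is the co-area step that yields an estimate uniform in $\varepsilon$, whereas the straightforward argument for $\varepsilon=\lambda^{-n}$ indicated in \cite{Parnovski2008} does not. Two reductions come first. Since $\mu\mapsto\mu^{d/2}$ is convex for $d\geqslant2$, the free operator satisfies $N_0(\lambda+\varepsilon)-N_0(\lambda)\geqslant\frac{\omega_d}{2(2\pi)^d}\varepsilon\lambda^{(d-2)/2}$ for all $\varepsilon>0$; and partitioning $[\lambda,\lambda+\varepsilon)$ into pieces of length at most $1$ and using $(\lambda+t)^{(d-2)/2}\geqslant\lambda^{(d-2)/2}$ reduces \eqref{IDS estimate} to the case $\varepsilon\in(0,1]$ with an $o(1)$ not depending on $\varepsilon$. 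I fix such a large $\lambda$ and such an $\varepsilon$, and set $I:=[\lambda,\lambda+\varepsilon)$.

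From the gauge transform of \cite{Parnovski2008} I would extract a \emph{non-resonant set} $\CB=\CB_\lambda\subset\R^d$ whose complement, inside $\{|\bxi|^2\leqslant2\lambda\}$, lies in a neighbourhood of finitely many affine hyperplanes together with sets of smaller dimension, and with these features: for each $\bk\in\Omd$ and each $\bn\in\Ld$ with $\bn+\bk\in\CB$ the operator $H(\bk)$ has a simple eigenvalue $\Phi(\bn+\bk)$, real-analytic in $\bk$, whose eigenfunction is an $O(\lambda^{-\gamma})$-perturbation in $L_2(\Om)$ of the plane wave $e^{i(\bn+\bk)\cdot\bx}$; distinct $\bn$ (for a given $\bk$) yield distinct eigenvalues of $H(\bk)$, counted with multiplicity; and, on $\CB$, $\Phi(\bxi)=|\bxi|^2+O(1)$ while $\nabla\Phi(\bxi)=2\bxi\,(1+o(1))$ — the gradient bound following from the eigenfunction estimate, with the crude bound $|\nabla_\bk\Phi|\leqslant2(\lambda+1+\|V\|_\infty)^{1/2}$ available for any eigenvalue lying in $I$. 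Granting this, for each $\bk$ one has $\#\{j:\lambda_j(\bk)\in I\}\geqslant\#\{\bn\in\Ld:\bn+\bk\in\CB,\ \Phi(\bn+\bk)\in I\}$, since the right-hand side counts, without repetition, eigenvalue slots of $H(\bk)$ inside $I$. Integrating over $\bk\in\Omd$ and using that the translates $\bth+\Omd$, $\bth\in\Ld$, tile $\R^d$,
\begin{equation*}
(2\pi)^d\big(N(\lambda+\varepsilon)-N(\lambda)\big)=\int_{\Omd}\#\{j:\lambda_j(\bk)\in I\}\,d\bk\ \geqslant\ \big|\{\bxi\in\CB:\Phi(\bxi)\in I\}\big|.
\end{equation*}

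It remains to bound the shell on the right from below. Applying the co-area formula on each real-analytic piece $\CB\cap(\bn+\Omd)$ and summing, and using $|\nabla\Phi|\leqslant2\sqrt\lambda\,(1+o(1))$ on the shell,
\begin{equation*}
\big|\{\bxi\in\CB:\Phi(\bxi)\in I\}\big|=\int_\lambda^{\lambda+\varepsilon}\Big(\int_{\{\Phi=t\}\cap\CB}\!|\nabla\Phi|^{-1}\,dS\Big)dt\ \geqslant\ \int_\lambda^{\lambda+\varepsilon}\frac{\big|\{\Phi=t\}\cap\CB\big|_{d-1}}{2\sqrt\lambda\,(1+o(1))}\,dt .
\end{equation*}
Since $\Phi(\bxi)=|\bxi|^2+O(1)$ with $\nabla\Phi$ within $o(\sqrt\lambda)$ of the radial field $2\bxi$, for each $t\in[\lambda,\lambda+1]$ the level set $\{\Phi=t\}\cap\CB$ meets each ray at most once, at radius $\sqrt t\,(1+o(1))$, and graphs with Jacobian $1+o(1)$ over the part of the sphere $\{|\bxi|=\sqrt t\}$ contained in $\CB$; hence $|\{\Phi=t\}\cap\CB|_{d-1}=(1+o(1))\,|\{|\bxi|=\sqrt t\}\cap\CB|_{d-1}$. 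The geometric estimate of \cite{Parnovski2008} that underlies the Bethe--Sommerfeld theorem — namely that the resonant set meets the sphere of radius $\sqrt t$ in $(d-1)$-measure $o(t^{(d-1)/2})$, uniformly for $t$ comparable to $\lambda$ — then gives $|\{|\bxi|=\sqrt t\}\cap\CB|_{d-1}=\omega_d t^{(d-1)/2}(1-o(1))\geqslant\omega_d\lambda^{(d-1)/2}(1-o(1))$. Combining the last two displays,
\begin{equation*}
(2\pi)^d\big(N(\lambda+\varepsilon)-N(\lambda)\big)\ \geqslant\ \int_\lambda^{\lambda+\varepsilon}\frac{\omega_d\lambda^{(d-1)/2}(1-o(1))}{2\sqrt\lambda\,(1+o(1))}\,dt=\frac{\omega_d}{2}\,\varepsilon\,\lambda^{(d-2)/2}(1-o(1)),
\end{equation*}
which is \eqref{IDS estimate} with $o(1)$ independent of $\varepsilon\in(0,1]$.

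The main difficulty is not the co-area manipulation but marshalling from \cite{Parnovski2008} the two inputs in exactly the form used above: the fine structure of the non-resonant ``sheet'' $\Phi$ — above all, that distinct Fourier modes contribute distinct eigenvalues and that $\nabla\Phi$ stays close to $2\bxi$, which has to be tracked through the gauge-transform construction there — and the negligibility of the resonant set on the sphere of radius $\sqrt\lambda$. The point that makes the bound uniform in $\varepsilon$ is that the gauge-transform remainder is never measured against $\varepsilon$: it merely perturbs the sheet $\Phi$ slightly, and the co-area formula turns the \emph{energy-independent} smallness of the resonant set on the sphere into a lower bound \emph{proportional to the shell width} $\varepsilon$.
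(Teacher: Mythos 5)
The decisive step in your argument is the package you propose to ``extract'' from \cite{Parnovski2008}: that on the whole non-resonant set $\CB$ the exact eigenvalue sheet $\Phi$ is \emph{simple}, real-analytic (or at least continuous), with $\nabla\Phi=2\bxi\,(1+o(1))$. That package is not available there, and supplying a usable substitute for it is precisely the new content of the present paper, not a citation exercise. The non-resonance condition defining $\CB$ only involves lattice vectors $\boldeta\in\T'_{6M}$, i.e.\ of length $O(R)=\rho^{o(1)}$; it does nothing to prevent two branches attached to distant Fourier modes, $f(\bxi)$ and $f(\bxi+\bn)$ with $|\bn|\asymp\rho$, from colliding at points of $\CB$. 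At such collisions the eigenvalue is not simple and the map $f$ (your $\Phi$) need not be continuous, so the co-area formula cannot be applied to it, and ``real-analytic in $\bk$'' is simply unjustified. Moreover, what \cite{Parnovski2008} does give is a smooth approximant $g$ with $|f-g|\leqslant\rho^{-d-3}$ only; since the theorem must hold for arbitrarily small $\varepsilon$ (in particular $\varepsilon\ll\rho^{-d-3}$), you can neither run your co-area argument on $g$ instead of $f$ nor dismiss the collisions as a small perturbation of the sheet --- contrary to your closing remark, the gauge-transform remainder \emph{is} effectively measured against $\varepsilon$ unless one works with the true eigenvalues. The paper's way around this is Lemma~\ref{good angles lemma}: using Proposition~\ref{maincorollary1}(iv) and the intersection-volume estimates of Lemma~\ref{volumeCB1}, it excises a further $o(1)$-measure set of directions so that along the remaining rays, and only within windows of width $\delta\leqslant\rho^{-d-3}$, no other branch comes within $C_1(\rho^{-d-3}+\rho L)$ of $f$; this yields continuity and simplicity there (via Lemma~\ref{cor:continuousf}), after which the speed bound of Lemma~\ref{gradient bound lemma} gives a segment of length $(1-\eta)\rho^{-1}\delta$ in each good ray, and general $\varepsilon$ is handled by summing over tiny windows. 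Your polar-coordinate/co-area computation is essentially the paper's Section~5 computation, and your reduction to $\varepsilon\in(0,1]$ and the counting inequality via injectivity of the branch map are fine; but as written the proof has a genuine gap at its core assumption about $\Phi$ on $\CB$.

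One smaller point: your parenthetical ``crude bound'' $|\nabla_\bk\Phi|\leqslant2(\lambda+1+\|V\|_\infty)^{1/2}$ is correct \emph{for a simple eigenvalue} (Feynman--Hellmann plus $\|\BD(\bk)\psi\|^2=\zeta-(\psi,V\psi)\leqslant\zeta+v$), and would indeed suffice in place of the paper's Lemma~\ref{gradient bound lemma}; but it presupposes the very simplicity/differentiability that is in question, so it does not repair the gap above.
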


By $B(R)$ we denote the ball of radius $R$ centered at the origin.
Given two positive functions $f$ and $g$,
we say that $f\gg g$, or $g\ll
f$, or $g=O(f)$ if the ratio $g/f$ is bounded. We say
$f\asymp g$ if $f\gg g$ and $f\ll g$.
Whenever we use $O$, $o$, $\gg$, $\ll$, or $\asymp$ notation, the
constants involved can depend on $d$ and norms of the potential in various Sobolev spaces $H^s$; the same is also the case when we use the
expression `sufficiently large'.
By $\la=\rho^2$ we denote a point on the spectral axis.
We also denote by $v$ the $L_{\infty}$--norm of the potential $V$, and put
$J:=[\la-20v,\la+20v]$.
Let
\bee\label{Arho1intro}
\CA:=\Big\{\bxi\in{\mathbb R}^d,\, \,\big| |\bxi|^2-\la\big|\leqslant 40 v \Big\}.
\ene
Notice that the definition of $\CA$ obviously implies that if $\bxi\in\CA$, then
$\big| |\bxi|-\rho\big|\ll\rho^{-1}$.

Any vector $\bxi\in\R^d$ can be
uniquely decomposed as $\bxi=\bn+\bk$ with $\bn\in\Ld$ and
$\bk\in \Omd$. We call $\bn=[\bxi]$ the `integer part' of $\bxi$ and
$\bk=\{\bxi\}$ the `fractional part' of $\bxi$.

By $\volume(\cdot)$ we denote the Lebesgue measure in $\mathbb R^d$.
The identity matrix is denoted by $\BI$.
For any $h\in L_2(\Om)$ we introduce its Fourier coefficients
\begin{equation}\label{Fourier transform}
h_{\mathbf n}:= (\volume\Om)^{-1/2}\int_{\Om}h(\mathbf x)\exp\big(-i \langle\mathbf n, \mathbf x\rangle\big)\rmd\mathbf x, \quad \mathbf n\in \Ld.
\end{equation}

For $\bxi\in \R^d\setminus\{\mathbf 0\}$ we define $r= r(\bxi):= |\bxi|$ and $\bxi':= \bxi/|\bxi|$. We put
\begin{equation}\label{R}
R= R(\rho):= \rho^{1/(36d^2(d+2))}
\end{equation}
(so that the condition stated after equation (5.15) in \cite{Parnovski2008} is satisfied). For $j\in \N$ let
\begin{equation*}
\T_j':= \Ld\cap B(jR)\setminus\{\mathbf 0\}.
\end{equation*}
Let $M:= 5d^2+ 7d$.
We introduce the set
\bee\label{B}
\CB:= \Big\{\bxi\in\CA\Big| \big|\langle\bxi, \boldeta'\rangle\big|>\rho^{1/2},\;\textrm{for all}\; \boldeta\in\T'_{6M}\Big\}.
\ene
In other words, $\CB$ consists of all points $\bxi\in\CA$ the projections of
which to the directions of all vectors $\boldeta\in\T'_{6M}$ have lengths larger than $\rho^{1/2}$.
We also denote $\CD:=\CA\setminus\CB$.

In the rest of the section we quote some results from \cite{Parnovski2008} which we will use in this paper. Our approach is slightly different from that of \cite{Parnovski2008}. In particular, we consider arbitrary lattice
of periods $\Lambda$, not equal to $(2\pi\Z)^d$. We also use a different form of the  Floquet-Bloch decomposition (so that the operators on fibers \eqref{skewed operators} are defined on the same domain). This leads to several straightforward changes in the formulation of the results from \cite{Parnovski2008}.
These changes are:

\begin{enumerate}
\item The lattices $(2\pi\Z)^d$ and $\Z^d$ are replaced by $\Lambda$ and $\Ld$, respectively. The `integer' and `fractional' parts are now defined with respect to $\Ld$ (see above);
\item The matrices $\BF$ and $\BG$ are replaced by the unit matrix $\BI$ throughout;
\item The Fourier transform is now defined by \eqref{Fourier transform}, and the exponentials $e_\bm$ introduced at the beginning of Section~5 in \cite{Parnovski2008} are redefined as
\begin{equation*}
e_{\bm}(\bx):= (\volume\Om)^{-1/2} \rme^{i\lu\bm,\bx\ru},\ \ \bm \in\Ld;
\end{equation*}
\item The operators $H(\bk)$ are now given by \eqref{skewed operators} on the common domain $\GD$.
\end{enumerate}

The main result we will need follows from Corollary~7.15 of \cite{Parnovski2008}:
\begin{prop}\label{maincorollary1}
There exist mappings $f,g:\CA\to\R$ which
satisfy the following properties:

(i) $f(\bxi)$ is an eigenvalue of $H(\bk)$ with $\{\bxi\}=\bk$;
$\big|f(\bxi)-|\bxi|^2\big|\leqslant 2v$. $f$ is an injection (if we count all eigenvalues with multiplicities)
and all eigenvalues of $H(\bk)$ inside
$J$ are in the image of $f$.

(ii) If $\bxi\in\CA$, then $\big|f(\bxi)-g(\bxi)\big|\leqslant \rho^{-d- 3}$.

(iii) For any $\bxi\in\CB$
\bee\label{eq1:maincorollary1}
\bes
&g(\bxi)=|\bxi|^2\\
&+\sum_{j=1}^{2M} \sum_{\boldeta_1,\dots,\boldeta_j\in \T'_M}
\sum_{2\leqslant n_1+\dots+n_j\leqslant 2M}C_{n_1,\dots,n_j}
\lu\bxi,\boldeta_1\ru^{-n_1}\dots\lu\bxi, \boldeta_{j}\ru^{-n_j}.
\end{split}
\ene

(iv) Let $I=[\ba,\bb]\subset\CA$ be a straight interval
of length $L:=|\bb-\ba|\ll\rho^{-1}$.
Then there exists an integer vector $\bn$ such that
$\big|g(\bb+\bn)-g(\ba)\big|\ll L\rho+\rho^{-d- 3}$. Moreover, suppose $\bm\ne 0$
is an integer vector such that the interval $I+\bm$ is also entirely inside $\CA$.
Then there exist two different integer vectors $\bn_1$ and $\bn_2$ such that
$\big|g(\bb+\bn_1)-g(\ba)\big|\ll L\rho+\rho^{-d- 3}$ and
$\big|g(\bb+\bn_2)-g(\ba+\bm)\big|\ll L\rho+\rho^{-d- 3}$.
\end{prop}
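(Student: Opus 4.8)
The plan is to derive Proposition~\ref{maincorollary1} from Corollary~7.15 of \cite{Parnovski2008}. Indeed, statements (i)--(iii) are, after the four notational substitutions listed at the end of \secref{preliminaries section}, a transcription of that corollary, so the real content of the argument is: (a) recall the construction that underlies it; (b) observe that the substitutions are cosmetic; and (c) extract (iv), the only item needing a small separate argument.

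For (a)--(b): the engine of \cite{Parnovski2008} is a gauge transform (Sections~5--6 there). For $\bk=\{\bxi\}$ with $\bxi\in\CA$ one constructs a unitary $U=U(\bk)$ on $L_2(\Om)$, a pseudodifferential operator of negative order built by an iterative KAM--type scheme, which conjugates $H(\bk)$ into $\BD(\bk)^2+W$ up to an error of operator norm $O(\rho^{-d-3})$, where $W$ is block diagonal in the exponential basis $\{e_\bm\}_{\bm\in\Ld}$: it is purely diagonal on the part of the spectrum labelled by the non-resonant set $\CB$, and on the resonant part $\CD$ it couples only the finitely many Fourier modes lying in a fixed resonant sublattice. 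One then defines $f(\bxi)$ to be the genuine eigenvalue of $H(\bk)$ attached to $\bxi$ by this almost--diagonalization (the labelling uses that the number of eigenvalues of $H(\bk)$ near $\la$ matches, with multiplicities, the number of $\bxi\in\CA$ with $\{\bxi\}=\bk$), and $g(\bxi)$ to be the corresponding eigenvalue of the model operator $\BD(\bk)^2+W$. The bound $|f(\bxi)-g(\bxi)|\leqslant\rho^{-d-3}$ is just the size of the discarded error; $|f(\bxi)-|\bxi|^2|\leqslant 2v$ and the injectivity/surjectivity statement come from the perturbative bookkeeping; and on $\CB$, where $W$ acts diagonally in $\{e_\bm\}$, non-resonant Rayleigh--Schr\"odinger perturbation theory with corrections of size $\lu\bxi,\boldeta\ru^{-1}$ produces the explicit finite sum \eqref{eq1:maincorollary1}, the inequality $|\lu\bxi,\boldeta'\ru|>\rho^{1/2}$ in the definition of $\CB$ ensuring convergence and the stated range $2\leqslant n_1+\dots+n_j\leqslant 2M$. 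As for the substitutions: replacing $(2\pi\Z)^d,\Z^d$ by $\Lambda,\Ld$ merely renames the lattice over which all sums and the Fourier expansion run; replacing $\BF,\BG$ by $\BI$ removes a preliminary linear change of variables that normalised the lattice and is no longer needed; the redefinition \eqref{Fourier transform} of the Fourier coefficients and of the $e_\bm$ is the matching normalisation on $L_2(\Om)$; and putting all $H(\bk)$ on the common domain $\GD=\{u\in H^2(\Om):u\ \Lambda\text{-periodic}\}$ rather than encoding $\bk$ in the boundary condition leaves every spectrum unchanged. None of these touches an estimate, so (i)--(iii) follow.

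For (c): to obtain (iv) I would take $I=[\ba,\bb]\subset\CA$ with $L=|\bb-\ba|\ll\rho^{-1}$ and use the Lipschitz dependence of eigenvalues on the quasimomentum. By (i), $f(\ba)$ is an eigenvalue of $H(\{\ba\})$; for any quasimomenta $\bk_1,\bk_2$ the identity $H(\bk_1)-H(\bk_2)=2\BD\cdot(\bk_1-\bk_2)+|\bk_1|^2-|\bk_2|^2$ together with $\|\BD u\|\ll\rho$ on unit eigenfunctions with eigenvalue near $\la$ gives $\dist\big(E,\spec H(\bk_2)\big)\ll|\bk_1-\bk_2|\rho$ for every such eigenvalue $E$ of $H(\bk_1)$. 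Tracking $f(\ba)$ along the segment from $\{\ba\}$ to $\{\bb\}$ (of length $\leqslant L$; if $I$ crosses $\partial\Omd$, break it into finitely many subintervals on which the fractional part is continuous), one obtains an eigenvalue $E_1$ of $H(\{\bb\})$ with $|E_1-f(\ba)|\ll L\rho$, equal by (i) to $f(\bb+\bn_1)$ for a unique $\bn_1\in\Ld$ with $\bb+\bn_1\in\CA$; then (ii) yields
\[
\big|g(\bb+\bn_1)-g(\ba)\big|\leqslant\big|g(\bb+\bn_1)-E_1\big|+\big|E_1-f(\ba)\big|+\big|f(\ba)-g(\ba)\big|\ll L\rho+\rho^{-d-3}.
\]
For the refinement, observe that if $\bm\ne 0$ is an integer vector then $\{\ba+\bm\}=\{\ba\}$ and $\{\bb+\bm\}=\{\bb\}$, so $f(\ba)$ and $f(\ba+\bm)$ are two \emph{distinct} eigenvalue ``slots'' of $H(\{\ba\})$ (distinct since $f$ is injective counting multiplicities); following both continuously from $\{\ba\}$ to $\{\bb\}$ keeps them distinct, so their endpoints are distinct slots $E_1,E_2$ of $H(\{\bb\})$, whence $\bn_1:=f^{-1}(E_1)-\bb\ne f^{-1}(E_2)-\bb=:\bn_2$, each satisfying the three-term estimate above with $g(\ba)$, resp. $g(\ba+\bm)$.

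The step I expect to cost the most effort is precisely this deduction of (iv). Because $g$ is defined through the model operator and is genuinely explicit only on $\CB$, one cannot invoke Lipschitz continuity of $g$ on all of $\CA$ — the naive choice $\bn=0$ breaks down exactly when $\bb\in\CD$, where level crossings make $g$ non-Lipschitz — so the argument must be routed through the honest eigenvalues of $H(\bk)$ and the injectivity of $f$, as above. The attendant nuisance is constant-bookkeeping: the translated point $\bb+\bn_1$ produced this way a priori only satisfies $\big||\bb+\bn_1|^2-\la\big|\leqslant 40v+O(L\rho)$, so one must absorb the $O(L\rho)\ll1$ slack (as \cite{Parnovski2008} does throughout) or run the construction on a marginally larger shell and restrict at the end; this is routine.
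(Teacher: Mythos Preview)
Your identification of the source is exactly right: the paper gives no proof of its own here---it simply records that the whole of Proposition~\ref{maincorollary1}, parts (i)--(iv) included, is Corollary~7.15 of \cite{Parnovski2008} after the four cosmetic substitutions listed in \secref{preliminaries section}. So your treatment of (i)--(iii) and of the substitutions matches the paper verbatim.

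Where you differ is in singling out (iv) as requiring a separate argument. The paper does not do this; it regards (iv) as part of the cited corollary as well. Your eigenvalue--tracking derivation of (iv) from (i) and (ii) via the Lipschitz bound $\|H(\bk_1)-H(\bk_2)\|_{\mathrm{relative}}\ll\rho|\bk_1-\bk_2|$ is nonetheless correct and self-contained: ordering the eigenvalues of $H(\bk)$ and following the two distinct branch indices attached to $f(\ba)$ and $f(\ba+\bm)$ along the segment from $\{\ba\}$ to $\{\bb\}$ does keep the slots distinct (distinct indices remain distinct, regardless of value crossings), so $\bn_1\ne\bn_2$ follows from the injectivity of $f$. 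Your caveat about absorbing the $O(L\rho)\ll 1$ slack to keep $\bb+\bn_1\in\CA$ is the right one and is handled the same way in \cite{Parnovski2008}. In short: your proposal is correct and slightly more explicit than the paper, which simply defers everything---including (iv)---to the cited reference.
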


\ber
Formula \eqref{eq1:maincorollary1} implies that
\bee\label{partialg}
\partial g/\partial r(\bxi)\asymp \rho, \quad \textrm{for any}\quad \bxi\in \CB.
\ene
\enr

For each positive $\de\leqslant v$ we denote by $\CA(\de)$, $\CB(\de)$, and $\CD(\de)$ the intersections
of $g^{-1}\big([\rho^2-\de,\rho^2+\de]\big)$ with
$\CA$, $\CB$, and $\CD$, respectively.

It is proved in Lemma~8.1 of \cite{Parnovski2008} that
\bee\label{volumeCD}
\volume\big(\CD(\de)\big)\ll \rho^{d-7/3}\de.
\ene

The following statement (Corollary~8.5 of \cite{Parnovski2008}) gives a sufficient condition for the continuity of $f$:

\bel\label{cor:continuousf} There is a constant $C_1$ with the following properties.
Let
\bees
I:=\big\{\bxi(t):\, t\in[t_{min},t_{max}]\big\}\subset \CB(v).
\enes
be a straight interval of length $L<\rho^{-1}\de$.
Suppose that there is a
point $t_0\in[t_{min},t_{max}]$ with the property that for each non-zero
$\bn\in\Ld$ $g\big(\bxi(t_0)+\bn\big)$ is either outside the interval
\bees
\Big[g\big(\bxi(t_0)\big)-C_1\rho^{-d- 3}-C_1\rho L,\; g\big(\bxi(t_0)\big)+C_1\rho^{-d- 3}+C_1\rho L\Big]
\enes
or not defined. Then $f\big(\bxi(t)\big)$ is a continuous function of $t$.
\enl

By inspection of the proof of Lemma~8.3 of \cite{Parnovski2008} we obtain

\bel\label{volumeCB1}
For large enough $\rho$ and $\de<\rho^{-1}$ the following estimates hold uniformly over $\ba\in\Ld\setminus\{\mathbf 0\}$:
if $d\geqslant 3$,
\bee\label{eq:volumeCB1}
\volume\Big(\CB(\de)\cap \big(\CB(\de)+\ba\big)\Big)\ll (\de^2\rho^{d-3}+\de\rho^{-d});
\ene
if $d= 2$,
\begin{equation}\label{intersection volume}
\volume\Big(\mathcal B(\delta)\cap\big(\mathcal B(\delta)+ \mathbf a\big)\Big)\begin{cases}\ll \delta^{3/2}, & |\mathbf{a}|\leqslant 2\rho- 1,\\ \ll \delta^{3/2}+ \delta\rho^{-2},& \big| |\mathbf{a}|- 2\rho\big|< 1,\\ = 0,& |\mathbf{a}|\geqslant 2\rho+ 1.\end{cases}
\end{equation}
\enl

\section{Prevalence of regular directions}\label{good angles section}

\begin{lem}\label{good angles lemma}
For $\rho$ big enough and
\begin{equation*}\label{delta bound}
0< \de\leqslant \rho^{-d- 3}
\end{equation*}
there exists a set $\CF= \CF(\rho)$ on the unit sphere $S^{d- 1}$ in $\mathbb R^d$ with
\begin{equation}\label{Psi condition}
|\CF|_\circ\geqslant \omega_d\big(1- o(1)\big)
\end{equation}
such that $f(\boldsymbol\xi)$ is a simple eigenvalue of $H\big(\{\boldsymbol\xi\}\big)$ continuously depending on $r:= |\bxi|$ for every $\boldsymbol\xi= (r, \boldsymbol\xi')\in f^{-1}\big([\rho^2- \de, \rho^2+ \de)\big)$ with $\boldsymbol\xi':= \bxi/|\bxi|\in \CF$.
\end{lem}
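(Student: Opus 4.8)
The plan is to build $\CF$ by removing from $S^{d-1}$ a small set of "bad" directions, where badness has two sources: directions along which the ray $\{(r,\bxi')\colon |\bxi|^2-\la|\le 40v\}$ spends too much time in $\CD$ (where we have no good control on $g$), and directions along which some translate $g(\bxi+\bn)$ stays inside the critical window of Lemma~\ref{cor:continuousf}, spoiling either simplicity or continuity. First I would fix $\bxi'\in S^{d-1}$ and look at the straight interval $I_{\bxi'}:=\{(r,\bxi')\colon ||\bxi|^2-\la|\le 40v\}\subset\CA$; by the remark after \eqref{Arho1intro} this has length $\asymp\rho^{-1}$, so it is short enough to apply Lemma~\ref{cor:continuousf} on subintervals (taking $L\asymp\rho^{-1}\de$, consistent with the hypothesis $\de\le\rho^{-d-3}$). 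The portion of $I_{\bxi'}$ lying in $f^{-1}([\rho^2-\de,\rho^2+\de))$ is, by part~(ii) of Proposition~\ref{maincorollary1}, contained in a $\rho^{-d-3}$-neighbourhood of $g^{-1}([\rho^2-\de,\rho^2+\de))$, hence (using \eqref{partialg} on $\CB$) is itself a union of intervals of total length $\ll\rho^{-1}\de+\rho^{-d-3}\ll\rho^{-d-4}$, small compared with $\rho^{-1}$.

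Next I would estimate, via Fubini over the sphere, the measure of directions that are bad. For the first source: the co-area / polar-coordinates formula turns the volume bound \eqref{volumeCD} on $\CD(\de)$ — or rather the cruder bound $\volume(\CD\cap\CA)$ coming from the $\de=v$ case — into a bound on $\int_{S^{d-1}}|I_{\bxi'}\cap\CD|\,\rmd\bxi'$, which forces the set of directions with $|I_{\bxi'}\cap\CD|$ not $o(\rho^{-1})$ to have measure $o(\omega_d)$; on the complementary directions the relevant arc of $I_{\bxi'}$ lies in $\CB(v)$, as Lemma~\ref{cor:continuousf} requires. For the second source: for each non-zero $\bn\in\Ld$ with $|\bn|$ bounded by roughly the diameter of $\CA$ (i.e.\ $|\bn|\ll\rho$), the set of $\bxi$ in $\CB(\de)$ for which $\bxi+\bn$ also lies in $\CB(\de)$ — a necessary condition for $g(\bxi+\bn)$ to fall in the forbidden window, since the window has half-width $\ll\rho^{-d-3}+\rho\cdot\rho^{-1}\de\ll\rho^{-d-3}$ and $g(\bxi+\bn)\in[\rho^2-O(\rho^{-d-3}),\rho^2+O(\rho^{-d-3})]$ means $\bxi+\bn\in\CB(\de')$ for a comparable $\de'$ — has volume controlled by Lemma~\ref{volumeCB1}. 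Summing \eqref{eq:volumeCB1} (or \eqref{intersection volume} when $d=2$) over the $\ll\rho^d$ relevant $\bn$ gives a total volume $\ll \rho^d(\de^2\rho^{d-3}+\de\rho^{-d})$, which, divided by the measure $\asymp\rho^{d-1}\cdot\rho^{-1}\cdot\rho^{-d-3}=\rho^{-5}$ of a typical bad arc-neighbourhood — wait, more carefully: against the total "mass" $\asymp\rho^{d-2}$ carried by $\CB(v)\cap\CA$ this is $o(1)$ for $\de\le\rho^{-d-3}$. Dividing out the sphere and using Fubini again, the directions hit by any such translate have measure $o(\omega_d)$.

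I would then set $\CF$ to be $S^{d-1}$ minus the union of these two bad sets; \eqref{Psi condition} follows. For $\bxi'\in\CF$ and $\bxi=(r,\bxi')\in f^{-1}([\rho^2-\de,\rho^2+\de))$: the relevant arc of $I_{\bxi'}$ through $\bxi$ lies in $\CB(v)$ and has length $\ll\rho^{-1}\de$, no translate $g(\bxi(t_0)+\bn)$ lies in the window of Lemma~\ref{cor:continuousf}, so that lemma gives continuity of $t\mapsto f(\bxi(t))$ in $r$; simplicity of $f(\bxi)$ as an eigenvalue of $H(\{\bxi\})$ follows because a double eigenvalue would, via the injectivity in part~(i) of Proposition~\ref{maincorollary1}, force two points $\bxi,\bxi+\bn$ with the same value of $f$ (hence of $g$ up to $O(\rho^{-d-3})$) both in $\CA$, i.e.\ both in $\CB(\de')$, which is exactly what we excluded. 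The main obstacle I anticipate is bookkeeping the exponents so that the translate-intersection estimate from Lemma~\ref{volumeCB1}, summed over $\bn$, genuinely beats the $o(1)$ threshold for $\de$ as large as $\rho^{-d-3}$ (rather than $\de$ polynomially small as is easy), and making sure the window half-width in Lemma~\ref{cor:continuousf}, namely $C_1\rho^{-d-3}+C_1\rho L$ with $L\ll\rho^{-1}\de$, stays $\ll\rho^{-d-3}$ so that "$g(\bxi+\bn)$ in the window" really does imply "$\bxi+\bn\in\CB(\de')$" with $\de'$ comparable to $\de$; this is where the precise choice $L\asymp\rho^{-1}\de$ and the hypothesis $\de\le\rho^{-d-3}$ are used.
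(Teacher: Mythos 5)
Your overall skeleton---remove directions whose ray meets the singular set $\CD$, remove directions for which some translate of $g$ lands in the window of Lemma~\ref{cor:continuousf}, control the latter with Lemma~\ref{volumeCB1}, and convert volume into solid angle through the arc length $\asymp\de\rho^{-1}$ coming from \eqref{partialg}---is the same as the paper's, but two steps have genuine gaps. First, the treatment of $\CD$: a Fubini/Chebyshev argument based on the volume bound \eqref{volumeCD} (or on $\volume(\CD\cap\CA)$) only shows that the set of directions $\bxi'$ with $|I_{\bxi'}\cap\CD|$ above some threshold is small; it cannot show that for most directions the relevant arc misses $\CD$ \emph{entirely}. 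Since that arc has length $\ll\de\rho^{-1}\leqslant\rho^{-d-4}$, far below anything your averaging can detect, a direction can pass your test ``$|I_{\bxi'}\cap\CD|=o(\rho^{-1})$'' while its whole relevant arc lies inside $\CD$---and then neither \eqref{partialg} (needed to know the arc length, hence your choice $L\asymp\rho^{-1}\de$) nor Lemma~\ref{cor:continuousf} (which requires $I\subset\CB(v)$) is available. The paper gets genuine emptiness directly from the definition \eqref{B}: each condition $|\langle\bxi,\boldeta'\rangle|\leqslant\rho^{1/2}$ with $\boldeta\in\T'_{6M}$ confines $\bxi'$ to a slab of angular measure $O(\rho^{-1/2})$, and there are only $O(R^d)$ such $\boldeta$, so the set of directions whose ray meets $\CD$ at all has measure $o(1)$; no use of \eqref{volumeCD} is made at this stage.

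Second, your translate analysis assumes that $g(\bxi+\bn)$ close to $\rho^2$ forces $\bxi+\bn\in\CB(\de')$; by definition it only forces $\bxi+\bn\in\CA(\de_1)=\CB(\de_1)\cup\CD(\de_1)$, and the $\CD$-case is precisely the delicate one, which your proposal omits. A crude union bound there fails: $\volume\CD(\de_1)\ll\de\rho^{d-7/3}$ summed over the $\ll\rho^{d}$ admissible $\bn$ gives $\de\rho^{2d-7/3}$, and after dividing by the per-direction mass $\asymp\de\rho^{d-2}$ of the arcs $I_{\bxi'}(\de)$ this is $\rho^{d-1/3}\to\infty$, not $o(1)$. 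The paper circumvents this by splitting $\CD(\de_1)$ into points with no, exactly one, or at least two translates back into $\CB(\de)$, discarding the last class via Lemma~8.7 of \cite{Parnovski2008} (which rests on Proposition~\ref{maincorollary1}(iv)), and exploiting uniqueness in the middle class so that the whole union contributes at most $\volume\CD(\de_1)\ll\de\rho^{d-7/3}$ once, with no factor $\rho^{d}$; only the $\CB(\de_1)$-translates are summed over $\bn$ as you do. The same conflation of $\CA(\de')$ with $\CB(\de')$ reappears in your simplicity argument, and your volume-to-angle conversion should be against $\de\rho^{d-2}$, not $\rho^{d-2}$ (the exponents still close, but as written the bookkeeping is off).
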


\bep
It is enough to consider $\delta:= \rho^{-d- 3}$. For each $\bxi'\in S^{d- 1}$ let
\begin{equation}\label{I_eta set}
I_{\boldsymbol\xi'}(\delta):= \{r\boldsymbol\xi', r> 0\}\cap \mathcal B(\delta).
\end{equation}
Let $\CF_1:= \big\{\boldsymbol\xi'\in S^{d- 1}\big| I_{\boldsymbol\xi'}(\delta)\neq\emptyset,\; \overline{I_{\boldsymbol\xi'}(\delta)}\cap\mathcal D(\delta)=\emptyset\big\}$.

For any $\boldeta\in\T'_{6M}$ the area of the set of points $\bxi'\in S^{d- 1}$ satisfying
\begin{equation*}
\big|\langle r\bxi',\boldeta'\rangle\big|\leqslant \rho^{1/2}
\end{equation*}
is evidently $O(\rho^{- 1/2})$ if $r\geqslant \rho/2$ (the latter is true for all $r\bxi'\in \CA$). Since the number of elements
in $\T'_{6M}$ is $O(R^d)$, by \eqref{R} and \eqref{B} we have
\begin{equation}\label{Psi_2 to S^d-1}
|S^{d- 1}\setminus \CF_1|_\circ= o(1).
\end{equation}

By definition $\mathcal B(\delta)= \mathcal B\cap g^{-1}\big([\rho^2- \delta, \rho^2+ \delta]\big)$, hence \eqref{partialg} implies that for big $\rho$ the length $l_{\boldsymbol\xi'}(\delta)$ of $I_{\boldsymbol\xi'}(\delta)$ satisfies
\begin{equation}\label{length estimate 1}
l_{\boldsymbol\xi'}(\delta)\asymp \delta\rho^{-1}, \quad \boldsymbol\xi'\in \CF_1.
\end{equation}

Let
\begin{equation*}
\CF:= \big\{\boldsymbol\xi'\in \CF_1\big\arrowvert f \textrm{ is continuous on }I_{\boldsymbol\xi'}(\delta)\big\},
\end{equation*}
and
\begin{equation*}
\mathcal E(\de):= \big\{\bxi\in\CB(\de)\big| \bxi'\in \CF_1\setminus\CF\big\}.
\end{equation*}
Lemma~\ref{cor:continuousf} tells us
that for each point $\bxi\in\CE(\de)$ there is a non-zero vector $\bn\in\Ld$ such that
\bee
\big|g(\bxi+\bn)-g(\bxi)\big|\leqslant C_1\big(\rho^{-d- 3}+\rho l_{\boldsymbol\xi'}(\de)\big)\ll(\rho^{-d- 3}+\de).
\ene
Since $\big|g(\bxi)-\rho^2\big|\leqslant\de$, this implies
\begin{equation*}
\big|g(\bxi+\bn)-\rho^2\big|\leqslant C_2(\rho^{-d- 3}+\de)=:\de_1\ll \rho^{-d- 3}= \delta,
\end{equation*}
and thus $\bxi+\bn\in\CA(\de_1)$; notice that $C_2> 1$ and so $\de_1> \de$.
Therefore, each point $\bxi\in\CE(\de)$ also belongs to the set $\big(\CA(\de_1)-\bn\big)$
for a non-zero $\bn\in\Ld$; obviously,
$|\bn|\ll\rho$. In other words,
\bee\label{cover1}
\CE(\de)\subset \bigcup_{\bn\in\Ld\cap B(C\rho), \,\bn\ne 0}
\bigl(\CA(\de_1)-\bn\bigr)=\bigcup_{\bn\ne 0}\bigl(\CB(\de_1)-\bn\bigr)\cup
\bigcup_{\bn\ne 0}\bigl(\CD(\de_1)-\bn\bigr).
\ene
To proceed further, we need more notation. Denote $\CD_0(\de_1)$ to be the set
of all points $\bnu$ from $\CD(\de_1)$ for which there is no non-zero $\bn\in\Ld$
satisfying $\bnu-\bn\in\CB(\de)$;  $\CD_1(\de_1)$ to be the set
of all points $\bnu$ from $\CD(\de_1)$ for which there is a unique non-zero $\bn\in\Ld$
satisfying $\bnu-\bn\in\CB(\de)$; and $\CD_2(\de_1)$ to be the rest of the points from $\CD(\de_1)$ (i.e. $\CD_2(\de_1)$ consists of all points $\bnu$ from $\CD(\de_1)$ for which there exist at least two different non-zero vectors $\bn_1,\bn_2\in\Ld$ satisfying $\bnu-\bn_j\in\CB(\de)$). Then Lemma~8.7 of \cite{Parnovski2008} implies that we can rewrite \eqref{cover1} as
\bee\label{cover2}
\CE(\de)
\subset\bigcup_{\bn\ne \mathbf0}\bigl(\CB(\de_1)-\bn\bigr)\cup
\bigcup_{\bn\ne \mathbf0}\bigl(\CD_1(\de_1)-\bn\bigr).
\ene
This, obviously, implies
\bee\label{cover3}
\CE(\de)\subset \bigcup_{\bn\ne \mathbf0}\Big(\big(\CB(\de_1)-\bn\big)\cap\CB(\de)\Big)\cup
\bigcup_{\bn\ne \mathbf0}\Big(\big(\CD_1(\de_1)-\bn\big)\cap\CB(\de)\Big),
\ene
since $\CE(\de)\subset \CB(\de)$.

The definition of the set $\CD_1(\de_1)$ and \eqref{volumeCD} imply that
\bee\label{volumecover1}
\bes
&\volume\bigg(\bigcup_{\bn\neq\mathbf0}\Bigl(\bigl(\CD_1(\de_1)-\bn\bigr)
\cap\CB(\de)\Bigr)\bigg)\leqslant \volume\big(\CD_1(\de_1)\big)\\& \leqslant \volume\big(\CD(\de_1)\big)
\ll  \de_1\rho^{d-7/3}\ll \de\rho^{d- 7/3}.
\end{split}
\ene
For $d\geqslant 3$ Lemma~\ref{volumeCB1}, inequality $\de< \de_1$, and the fact that the union in \eqref{cover3} consists of no more than $C\rho^d$ terms imply
\bee\label{volumecover2}
\volume\bigg(\bigcup_{\bn\neq\mathbf0}\Bigl(\bigl(\CB(\de_1)-\bn\bigr)
\cap\CB(\de)\Bigr)\bigg)\ll\rho^d(\de_1^2\rho^{d-3}+\de_1\rho^{-d})\ll \de(\rho^{d- 6}+ 1).
\ene
For $d= 2$ we obtain by Lemma~\ref{volumeCB1}
\begin{equation}\label{new (8.25)}\begin{split}
&\volume\bigg(\underset{\mathbf n\in \Ld\setminus\{\mathbf0\}}\bigcup \Big(\mathcal B(\delta)\cap\big(\mathcal B(\delta_1)+ \mathbf n\big)\Big)\bigg)\\ &\leqslant \sum_{\substack{\mathbf n\in \Ld\setminus\{\mathbf0\}\\ |\mathbf{n}|\leqslant 2\rho- 1}}\volume\Big(\mathcal B(\delta)\cap\big(\mathcal B(\delta_1)+ \mathbf n\big)\Big)\\ &+ \sum_{\substack{\mathbf n\in \Ld\setminus\{\mathbf0\}\\ ||\mathbf{n}|- 2\rho|< 1}}\volume\Big(\mathcal B(\delta)\cap\big(\mathcal B(\delta_1)+ \mathbf n\big)\Big)\\ &\ll \delta_1^{3/2}\rho^2+ \rho(\delta_1^{3/2}+ \delta_1\rho^{-2})\ll \de\rho^{-1/2},
\end{split}\end{equation}
where we have used that 
\begin{equation*}
\#\Big\{\mathbf n\in \Ld\Big|\big||\mathbf{n}|- 2\rho\big|< 1\Big\}\ll \rho.
\end{equation*}

Applying \eqref{volumecover1}, \eqref{volumecover2}, and \eqref{new (8.25)} to \eqref{cover3} we obtain for all $d\geqslant 2$
\begin{equation}\label{CE estimate}
\volume \CE(\de)\ll \de\rho^{d- 7/3}.
\end{equation}
By definition,
\begin{equation*}
\mathcal E(\delta)= \bigcup_{\boldsymbol\xi'\in\CF_1\setminus\CF}I_{\boldsymbol\xi'}(\delta).
\end{equation*}
Hence by \eqref{length estimate 1}
\begin{equation}\label{area of Psi_2 without Psi}
|\CF_1\setminus\CF|_\circ\ll \delta^{-1}\rho^{2- d}\volume\mathcal E(\delta).
\end{equation}
Combining \eqref{CE estimate} and \eqref{area of Psi_2 without Psi} we conclude that for big $\rho$
\begin{equation}\label{good bound on Psi_2 without Psi}
|\CF_1\setminus\CF|_\circ= o(1).
\end{equation}
We have
\begin{equation}\label{complement area}
|S^{d- 1}\setminus \CF|_\circ= |S^{d- 1}\setminus \CF_1|_\circ+ |\CF_1\setminus\CF|_\circ.
\end{equation}
Substituting \eqref{Psi_2 to S^d-1} and \eqref{good bound on Psi_2 without Psi} into \eqref{complement area} we obtain \eqref{Psi condition}.

Now we notice that for every $\bxi'\in \CF$ the interval $I_{\bxi'}(\de)$ has the following property: for each point $\bxi\in I_{\bxi'}(\de)$ and
each non-zero vector $\bn\in\Ld$ such that $\bxi+\bn\in\CA$ we have
$\big|g(\bxi+\bn)-g(\bxi)\big|>2\rho^{-d- 3}$. This implies $f(\bxi+\bn)-f(\bxi)\ne 0$. Therefore,
$f(\bxi)$ is a simple eigenvalue of $H\big(\{\bxi\}\big)$ for each $\bxi\in I_{\bxi'}(\de)$.
The lemma is proved.
\enp

\section{Some properties of operators on the fibers}

For $m\in\R$ let
\begin{equation*}
V^{(m)}:= \Big(\sum_{\mathbf n\in \Ld}|\mathbf n|^{2m}|V_{\mathbf n}|^2\Big)^{1/2}.
\end{equation*}
Since $V$ is smooth, $V^{(m)}$ is finite for any $m\geqslant 0$. Recall that $Q$ is defined by \eqref{Brillouin radius}.

\begin{lem}\label{l1}
Fix $m\in \mathbb N$ and $\varkappa\in (0, 1)$. For $\mathbf k\in \Omd$ let $\psi$ be a normalized eigenfunction of $H(\mathbf k)$:
\begin{equation}\label{g16}
H(\mathbf k)\psi = \zeta\psi
\end{equation}
with the eigenvalue
\begin{equation}\label{zeta condition}
\zeta\geqslant \max\big\{36Q^2\varkappa^{-2}, (1+ m\varkappa)^{2/(d- 1)}\varkappa^{-2d/(d- 1)}\big\}.
\end{equation}
Then there exists $M_m= M_m(d, \Lambda, V)\in \R_+$ such that for all $\mathbf n\in\Ld$ with
\begin{equation}\label{n assumption}
|\mathbf n|\geqslant (1+ m\varkappa)\sqrt\zeta
\end{equation}
the Fourier coefficients of $\psi$ satisfy
\begin{equation}\label{coefficient decay}
|\psi_\mathbf n|< M_m\varkappa^{-m}|\mathbf n|^{-(3m+ 1)/2}.
\end{equation}
\end{lem}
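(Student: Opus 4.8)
The plan is to pass to Fourier coefficients and turn the eigenvalue equation into a recursion on which one can run an induction on $m$. Writing $\psi=\sum_{\bn\in\Ld}\psi_\bn e_\bn$ and setting $\Lambda_\bn:=|\bn+\bk|^2-\zeta$, equation \eqref{g16} is equivalent to
\[
\Lambda_\bn\psi_\bn=-(\volume\Om)^{-1/2}\sum_{\bm\in\Ld}V_\bm\,\psi_{\bn-\bm},\qquad\bn\in\Ld .
\]
The first ingredient is a lower bound for $|\Lambda_\bn|$. Since $\bk\in\Omd$ we have $|\bk|\leqslant Q$, and the bound $\zeta\geqslant 36Q^2\varkappa^{-2}$ in \eqref{zeta condition} forces $|\bk|\leqslant\varkappa\sqrt\zeta/6$; expanding $|\bn+\bk|^2\geqslant(|\bn|-Q)^2$ then gives, with absolute constants,
\[
\Lambda_\bn\gg\varkappa|\bn|^2\quad(|\bn|\geqslant(1+\varkappa)\sqrt\zeta),\qquad\Lambda_\bn\gg|\bn|^2\quad(|\bn|\geqslant 2\sqrt\zeta).
\]
Hence $\psi_\bn=-\Lambda_\bn^{-1}(V\psi)_\bn$ is a genuine identity on the range of $\bn$ relevant to the lemma. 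Throughout, implied constants may depend on $d,\Lambda,V,m$ but not on $\varkappa$ or $\zeta$, this last dependence being tracked explicitly.

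For $m=1$ the lemma is immediate: Cauchy--Schwarz gives $\bigl|\sum_\bm V_\bm\psi_{\bn-\bm}\bigr|\leqslant V^{(0)}\|\psi\|_{L_2(\Om)}=V^{(0)}$, whence $|\psi_\bn|\ll V^{(0)}\varkappa^{-1}|\bn|^{-2}$ for $|\bn|\geqslant(1+\varkappa)\sqrt\zeta$, which is \eqref{coefficient decay} with the exponent $(3\cdot1+1)/2=2$. For the inductive step, assume \eqref{coefficient decay} for $m$ and let $|\bn|\geqslant(1+(m+1)\varkappa)\sqrt\zeta$. Fix the Sobolev index $s:=m(d+2)/2$ (the precise value is what makes the estimate close; note $2s>d$, so $\sum_\bm|V_\bm|\ll V^{(0)}+V^{(s)}<\infty$) and split the convolution at $|\bm|=\varkappa\sqrt\zeta$. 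On the ``near'' range $|\bm|\leqslant\varkappa\sqrt\zeta$ one has $\varkappa\sqrt\zeta\leqslant\tfrac12|\bn|$, so $|\bn-\bm|\asymp|\bn|$, and crucially $|\bn-\bm|\geqslant|\bn|-\varkappa\sqrt\zeta\geqslant(1+m\varkappa)\sqrt\zeta$, so the inductive bound applies to $\psi_{\bn-\bm}$; hence the near part of $(V\psi)_\bn$ is $\ll\varkappa^{-m}|\bn|^{-(3m+1)/2}$, and dividing by $\Lambda_\bn\gg\varkappa|\bn|^2$ it contributes $\ll\varkappa^{-(m+1)}|\bn|^{-(3m+5)/2}$ to $|\psi_\bn|$, a half-power of $|\bn|$ better than the target exponent $(3(m+1)+1)/2$.

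On the ``far'' range $|\bm|>\varkappa\sqrt\zeta$ I would use $|V_\bm|\leqslant V^{(s)}|\bm|^{-s}$ and distinguish two cases according to the size of $|\bn|$. If $|\bn|\geqslant 2(1+m\varkappa)\sqrt\zeta$, then $\Lambda_\bn\gg|\bn|^2$, and splitting the far part once more at $|\bm|=\tfrac12|\bn|$ (the inductive bound applies when $|\bm|\leqslant\tfrac12|\bn|$, since then $|\bn-\bm|\geqslant(1+m\varkappa)\sqrt\zeta$, while $\sum_{|\bm|>\frac12|\bn|}|V_\bm|^2\leqslant(\tfrac12|\bn|)^{-2s}(V^{(s)})^2$ handles the rest) shows this case contributes terms dominated by the target. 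If instead $(1+(m+1)\varkappa)\sqrt\zeta\leqslant|\bn|<2(1+m\varkappa)\sqrt\zeta$, so that $|\bn|\asymp\sqrt\zeta$, Cauchy--Schwarz gives far part $\ll(\varkappa\sqrt\zeta)^{-s}V^{(s)}$, hence a contribution $\ll(\varkappa\sqrt\zeta)^{-s}V^{(s)}/(\varkappa|\bn|^2)\asymp\varkappa^{-1-s}\zeta^{-1-s/2}V^{(s)}$ to $|\psi_\bn|$; demanding this be $\ll\varkappa^{-(m+1)}\zeta^{-(3m+4)/4}$ (the target, since $|\bn|\asymp\sqrt\zeta$) reduces after cancellation to $\varkappa^{-(s-m)}\zeta^{-(s/2-3m/4)}\ll1$, i.e.\ to $\zeta\gg\varkappa^{-(s-m)/(s/2-3m/4)}=\varkappa^{-2d/(d-1)}$ — which is exactly the second alternative in \eqref{zeta condition}, the factor $(1+m\varkappa)^{2/(d-1)}$ there absorbing the $m$-dependent constants. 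Summing the three contributions and enlarging $M_{m+1}$ accordingly (which can be done $\varkappa$- and $\zeta$-independently, as each contribution already carries at most $\varkappa^{-(m+1)}$ and the correct power of $|\bn|$) completes the induction.

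The lower bound for $|\Lambda_\bn|$ and the Cauchy--Schwarz and lattice-tail estimates are routine. The step I expect to be the main obstacle is the bookkeeping of the power of $\varkappa$ in the ``far'' part when $|\bn|\asymp\sqrt\zeta$: this is why the threshold for $\zeta$ in \eqref{zeta condition} carries precisely the exponent $2d/(d-1)$, since with $s=m(d+2)/2$ one computes $(s-m)/(s/2-3m/4)=2d/(d-1)$, so the required inequality $\varkappa^{-(s-m)}\zeta^{-(s/2-3m/4)}\ll1$ is the borderline case and would fail for any weaker lower bound on $\zeta$.
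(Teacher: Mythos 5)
Your proof is correct, and its skeleton coincides with the paper's: pass to Fourier coefficients, bound the denominator $|\bn+\bk|^2-\zeta\gg\varkappa|\bn|^2$ using $\zeta\geqslant 36Q^2\varkappa^{-2}$ and $|\bn|\geqslant(1+\varkappa)\sqrt\zeta$, and run an induction on $m$ in which the convolution $\sum_\bl V_{\bn-\bl}\psi_\bl$ is split into a part controlled by the inductive hypothesis and a tail controlled by Cauchy--Schwarz and the smoothness of $V$. Where you genuinely diverge is in the decomposition and in the role played by the second alternative of \eqref{zeta condition}. The paper splits at $|\bn-\bl|\leqslant|\bn|^{1/d}$, uses the counting bound $\sum_{|\bl|\leqslant|\bn|^{1/d}}|V_\bl|\ll|\bn|^{1/2}$ (thus sacrificing half a power of $|\bn|$ in the near part, which is exactly why the gain per step is $|\bn|^{-3/2}$ rather than $|\bn|^{-2}$), and needs $\zeta\geqslant(1+m\varkappa)^{2/(d-1)}\varkappa^{-2d/(d-1)}$ precisely to guarantee $|\bn|-|\bn|^{1/d}\geqslant(1+(m-1)\varkappa)\sqrt\zeta$, i.e.\ to make the inductive hypothesis applicable on the near range; the far part is then handled by a weighted Cauchy--Schwarz with $V^{(3(m-1)d/2)}$. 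You instead split at $|\bm|=\varkappa\sqrt\zeta$, where the inductive hypothesis applies for free and $\sum_\bm|V_\bm|=O(1)$, so your near part actually gains a full $|\bn|^{-2}$; the price is paid in the far part, where you need the pointwise decay $|V_\bm|\leqslant V^{(s)}|\bm|^{-s}$ with $s=m(d+2)/2$ and a case analysis on $|\bn|$, and the threshold exponent $2d/(d-1)$ reappears as the borderline condition closing the regime $|\bn|\asymp\sqrt\zeta$ (your computation $(s-m)/(s/2-3m/4)=2d/(d-1)$ is correct, and the resulting constant is legitimately absorbed into $M_{m+1}$). It is a pleasant coincidence worth noting that the same exponent of $\varkappa$ is forced by two different bottlenecks; your version makes the condition \eqref{zeta condition} look sharp for a different reason than the paper's does. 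Both arguments use only that all $V^{(s)}$ are finite, so neither is more general than the other.
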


\begin{proof}{}
We proceed by induction. Suppose that either $m= 1$, or $m> 1$ and the statement is proved for $m- 1$.
Substituting the Fourier series
\begin{equation*}\label{star}
\psi(\bx) = (\volume \Om)^{-1/2}\sum_{\mathbf n\in \Ld}\psi_\mathbf n\exp\big(i\langle\mathbf n, \mathbf x\rangle\big), \quad \bx\in\Om
\end{equation*}
into \eqref{g16} and equating the coefficients at $\exp\big(i\langle\mathbf n, \mathbf x\rangle\big)$ on both sides, we obtain by \eqref{skewed operators}:
\begin{equation}\label{eq4}
|\mathbf n+\mathbf k|^2 \psi_\mathbf n+ \sum\limits_{\mathbf l\in \Ld} V_{\mathbf n-\mathbf l}\psi_\mathbf l = \zeta \psi_\mathbf n.
\end{equation}
Since $|\mathbf k|\leqslant Q$, by \eqref{zeta condition} and \eqref{n assumption} we have
\begin{equation}\label{nk estimate}
2|\bn||\bk|\leqslant \varkappa|\bn|^2/6+ 6\varkappa^{-1}Q^2\leqslant \varkappa|\bn|^2/3.
\end{equation}
For $\varkappa\in(0, 1)$, it follows from \eqref{n assumption} that
\begin{equation}\label{n-zeta estimate}
|\bn|^2- \zeta\geqslant \big(1- (1+ \varkappa)^{-2}\big)|\bn|^2= \varkappa(2+ \varkappa)(1+ \varkappa)^{-2}|\bn|^2\geqslant \varkappa|\bn|^2/2.
\end{equation}
Combining \eqref{nk estimate} and \eqref{n-zeta estimate} we obtain
\begin{equation*}
|\bn+ \bk|^2- \zeta\geqslant |\bn|^2- 2|\bn||\bk|- \zeta\geqslant \varkappa|\bn|^2/6,
\end{equation*}
and thus by \eqref{eq4}
\begin{equation}\label{halfway}
|\psi_\mathbf n|< 6\varkappa^{-1}|\mathbf n|^{-2}\sum\limits_{\mathbf l\in \Ld}|V_{\mathbf n-\mathbf l}\psi_\mathbf l|.
\end{equation}

If $m= 1$ we estimate the sum on the \rhs by $V^{(0)}$ using Cauchy--Schwarz inequality (since $\psi$ is normalized) and obtain \eqref{coefficient decay} with $M_1:= 6V^{(0)}$.

If $m> 1$, we estimate
\begin{equation}\label{inner stuff}
\sum\limits_{\bl\in \Ld:\, |\mathbf l- \mathbf n|\leqslant |\mathbf n|^{1/d}}|V_{\mathbf n-\mathbf l}\psi_{\mathbf l}|\leqslant \underset{\bm: \,|\mathbf m|\geqslant |\mathbf n|- |\mathbf n|^{1/d}}\sup|\psi_\mathbf m|\sum_{\bl\in \Ld:\, |\mathbf l|\leqslant |\mathbf n|^{1/d}}|V_\mathbf l|.
\end{equation}
By \eqref{n assumption}, \eqref{zeta condition}, and monotonicity of the function $q(t)= t- t^{1/d}$ for $t> 1$ we have
\begin{equation*}
|\mathbf n|- |\mathbf n|^{1/d}\geqslant (1+ m\varkappa)\sqrt\zeta- \big((1+ m\varkappa)\sqrt\zeta\big)^{1/d}\geqslant \big(1+ (m- 1)\varkappa\big)\sqrt\zeta.
\end{equation*}
We can thus apply the induction hypothesis obtaining
\begin{equation}\label{induction}
\underset{\bm:\, |\mathbf m|\geqslant |\mathbf n|- |\mathbf n|^{1/d}}\sup|\psi_\mathbf m|\leqslant \varkappa^{1- m}M_{m- 1}\big(1- |\mathbf n|^{(1- d)/d}\big)^{1- 3m/2}|\mathbf n|^{1- 3m/2}.
\end{equation}
Since $\varkappa\in (0, 1)$, by \eqref{n assumption} and \eqref{zeta condition} we have
\begin{equation*}
|\bn|\geqslant (1+ m\varkappa)\sqrt\zeta\geqslant (\varkappa^{-1}+ m)^{d/(d- 1)}> 2^{d/(d- 1)},
\end{equation*}
thus
\begin{equation}\label{>2^m}
\big(1- |\bn|^{(1- d)/d}\big)^{1- 3m/2}< 2^{3m/2- 1}.
\end{equation}

Let
\begin{equation*}
W:= \sup_{r> 1}r^{-d}\#\big\{\mathbf l\in\Ld\big| |\mathbf l|\leqslant r\big\}.
\end{equation*}
Clearly, $W< \infty$. Then by Cauchy--Schwarz inequality
\begin{equation}\label{V in ball}
\sum_{\bl\in \Ld:\, |\mathbf l|\leqslant |\mathbf n|^{1/d}}|V_\mathbf l|\leqslant W^{1/2}V^{(0)}|\mathbf n|^{1/2}.
\end{equation}
Substituting \eqref{induction}, \eqref{>2^m},  and \eqref{V in ball} into \eqref{inner stuff} we get
\begin{equation}\label{better inner stuff}
\sum\limits_{\bl\in \Ld:\, |\mathbf l- \mathbf n|\leqslant |\mathbf n|^{1/d}}|V_{\mathbf n-\mathbf l}\psi_{\mathbf l}|< 2^{3m/2- 1}\varkappa^{1- m}W^{1/2}V^{(0)}M_{m- 1}|\mathbf n|^{3(1- m)/2}.
\end{equation}
On the other hand, since $\|\psi\|= 1$, by Cauchy--Schwarz we have
\begin{equation}\label{outer stuff}\begin{split}
&\sum\limits_{\bl\in \Ld:\, |\mathbf l- \mathbf n|> |\mathbf n|^{1/d}}|V_{\mathbf n-\mathbf l}\psi_{\mathbf l}|< |\mathbf n|^{3(1- m)/2}\sum\limits_{\bl\in \Ld:\, |\mathbf l|> |\mathbf n|^{1/d}}|\mathbf l|^{3(m- 1)d/2}|V_{\mathbf l}||\psi_{\bn- \bl}|\\ &\leqslant |\mathbf n|^{3(1- m)/2}\Big(\sum\limits_{\bl\in \Ld:\, |\mathbf l|> |\mathbf n|^{1/d}}|\mathbf l|^{3(m- 1)d}|V_{\mathbf l}|^2\Big)^{1/2}\leqslant V^{(3(m- 1)d/2)}|\mathbf n|^{3(1- m)/2}.
\end{split}\end{equation}
Inserting \eqref{better inner stuff} and \eqref{outer stuff} into \eqref{halfway} we obtain \eqref{coefficient decay} with
\begin{equation*}
M_m:= 6(2^{3m/2- 1}W^{1/2}V^{(0)}M_{m- 1}+ V^{(3(m- 1)d/2)}).
\end{equation*}
\end{proof}

\begin{lem}\label{gradient bound lemma}
For any $\eta\in (0, 1)$ there exists $\zeta_0> 0$ such that if $\zeta(\mathbf k)\geqslant \zeta_0$ is a simple eigenvalue of $H(\mathbf k)$ for some $\bk\in\Omd$ then
\begin{equation}\label{gradient bound}
|\nabla_\bk\zeta|\leqslant 2(1+ \eta)\sqrt\zeta.
\end{equation}
\end{lem}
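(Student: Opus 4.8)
The plan is to compute $\nabla_\bk\zeta$ by first--order perturbation theory (the Feynman--Hellmann formula) and then exploit the fact that a normalized eigenfunction of $H(\bk)$ is concentrated, in Fourier space, near the sphere $|\bn+\bk|\approx\sqrt\zeta$, so that the gradient cannot be much larger than $2\sqrt\zeta$. Since $H(\bk)=(\BD+\bk)^2+V$ depends polynomially on $\bk$ on the fixed domain $\GD$, analytic perturbation theory shows that in a neighbourhood of a point where $\zeta(\bk)$ is a \emph{simple} eigenvalue it is a real--analytic function of $\bk$, and one may choose a real--analytic normalized eigenfunction $\psi=\psi(\bk)$. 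Differentiating $\zeta(\bk)=\langle\psi,H(\bk)\psi\rangle$, using $\|\psi\|=1$ (so that the terms with $\partial_{k_j}\psi$ cancel) and $\partial_{k_j}H(\bk)=2\BD_j(\bk)$, we get
\begin{equation*}
\nabla_\bk\zeta=2\langle\psi,\BD(\bk)\psi\rangle=2\sum_{\bn\in\Ld}(\bn+\bk)|\psi_\bn|^2 ,
\end{equation*}
where the last equality is the computation of the expectation in the Fourier basis. Hence, using $|\bk|\leqslant Q$,
\begin{equation*}
|\nabla_\bk\zeta|\leqslant 2\sum_{\bn\in\Ld}|\bn+\bk|\,|\psi_\bn|^2\leqslant 2Q+2\sum_{\bn\in\Ld}|\bn|\,|\psi_\bn|^2 .
\end{equation*}

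Next I would fix $m\in\N$ with $3m>d$ (this depends only on $d$) and choose $\varkappa=\varkappa(\eta)\in(0,1)$ so small that $m\varkappa<\eta/2$. Taking $\zeta_0$ large enough so that \eqref{zeta condition} holds for this $m$ and $\varkappa$, \lemref{l1} applies and gives $|\psi_\bn|<M_m\varkappa^{-m}|\bn|^{-(3m+1)/2}$ for all $\bn$ with $|\bn|\geqslant(1+m\varkappa)\sqrt\zeta$. Splitting the sum $\sum_\bn|\bn|\,|\psi_\bn|^2$ at $|\bn|=(1+m\varkappa)\sqrt\zeta$: on the low part $|\bn|<(1+m\varkappa)\sqrt\zeta$ and, extending the sum to all $\bn$, it is at most $(1+m\varkappa)\sqrt\zeta\,\sum_\bn|\psi_\bn|^2=(1+m\varkappa)\sqrt\zeta$; on the high part
\begin{equation*}
\sum_{|\bn|\geqslant(1+m\varkappa)\sqrt\zeta}|\bn|\,|\psi_\bn|^2< M_m^2\varkappa^{-2m}\sum_{|\bn|\geqslant(1+m\varkappa)\sqrt\zeta}|\bn|^{-3m}\leqslant C(\eta)\,\zeta^{(d-3m)/2},
\end{equation*}
where the last inequality follows from a dyadic decomposition of the shells together with $\#\{\bn\in\Ld:|\bn|\leqslant r\}\ll r^d$ and $3m>d$, and $C(\eta)$ depends only on $d$, $\Lambda$, $V$ and $\eta$.

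Finally, since $(d-3m)/2<0$, the high part is $o(\sqrt\zeta)$; enlarging $\zeta_0$ (depending on $\eta$) we may assume that it is at most $(\eta/4)\sqrt\zeta$ and also that $Q\leqslant(\eta/4)\sqrt\zeta$. Collecting the three contributions,
\begin{equation*}
|\nabla_\bk\zeta|\leqslant 2Q+2(1+m\varkappa)\sqrt\zeta+o(\sqrt\zeta)\leqslant 2\Bigl(\tfrac{\eta}{4}+1+\tfrac{\eta}{2}+\tfrac{\eta}{4}\Bigr)\sqrt\zeta=2(1+\eta)\sqrt\zeta,
\end{equation*}
which is \eqref{gradient bound}. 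The only genuinely delicate point is the justification of the Feynman--Hellmann identity, i.e.\ that simplicity of $\zeta$ makes the eigenvalue and a normalized eigenvector differentiable in $\bk$ (this is standard Rellich--Kato theory for the polynomial family $H(\bk)$); everything else is elementary once the Fourier decay bound of \lemref{l1} is available. If one prefers to avoid smoothness of the eigenvector, the same bound on the one--sided directional derivatives of $\zeta$ can be extracted directly from the variational characterisation of the (isolated, simple) eigenvalue $\zeta(\bk)$ via the analytic rank--one eigenprojection $P(\bk)$, writing $\zeta(\bk)=\tr\bigl(P(\bk)H(\bk)\bigr)$ and differentiating.
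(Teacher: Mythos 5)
Your proof is correct and follows essentially the same route as the paper: the Feynman--Hellmann identity $\nabla_\bk\zeta=2\langle\psi,\BD(\bk)\psi\rangle$, then the Fourier decay of Lemma~\ref{l1} with $\varkappa$ chosen small in terms of $\eta$, splitting the lattice sum at $|\bn|=(1+m\varkappa)\sqrt\zeta$ and enlarging $\zeta_0$ to absorb $Q$ and the tail. The only (harmless) difference is that you bound the expectation by the weighted $\ell^1$ sum $\sum_\bn|\bn+\bk|\,|\psi_\bn|^2$, whereas the paper applies Cauchy--Schwarz and works with $\bigl(\sum_\bn|\bn+\bk|^2|\psi_\bn|^2\bigr)^{1/2}$, which is why it takes $m=[(d+1)/3]+1$ while $3m>d$ suffices for your version.
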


\begin{proof}{}
Let $\psi(\mathbf k)$ be the eigenfunction corresponding to $\zeta(\bk)$ with
\begin{equation}\label{Psi normalization}
\big\|\psi(\mathbf k)\big\|= 1.
\end{equation}
Then
\begin{equation}\label{1g}
\nabla_{\mathbf k}\zeta(\mathbf k)= \nabla_{\mathbf k}\big(\psi(\mathbf k), H(\mathbf k)\psi(\mathbf k)\big)= \Big(\psi(\mathbf k), \big(\nabla_{\mathbf k}H(\mathbf k)\big)\psi(\mathbf k)\Big).
\end{equation}
By \eqref{skewed operators} and \eqref{Ds},
\begin{equation*}
\nabla_{\mathbf k}H(\mathbf k)= 2\mathbf D(\mathbf k).
\end{equation*}
Substituting this into (\ref{1g}) we obtain:
\begin{equation}\label{2g}
\big|\nabla_{\mathbf k}\zeta(\mathbf k)\big|\leqslant 2\big\|\mathbf D(\mathbf k)\psi(\mathbf k)\big\|= 2\Big(\sum\limits_{\mathbf n\in \Ld}|\mathbf n+ \mathbf k|^2\big|\psi_\mathbf n(\mathbf k)\big|^2\Big)^{1/2}.
\end{equation}
Let
\begin{equation}\label{m choice}
m:= \big[(d+ 1)/3\big]+ 1
\end{equation}
and
\begin{equation}\label{kappa choice}
\varkappa:= \eta/(2m+ 1).
\end{equation}
We put
\begin{equation}\label{eta choice}
\zeta_0:= \max\big\{36Q^2\varkappa^{-2}, (1+ m\varkappa)^{2/(d- 1)}\varkappa^{-2d/(d- 1)}\big\}
\end{equation}
and assume that $\zeta:= \zeta(\mathbf k)\geqslant \zeta_0$.
Since by \eqref{Brillouin radius} $|\mathbf k|\leqslant Q$, by \eqref{Psi normalization}, \eqref{eta choice}, and \eqref{kappa choice} we have
\begin{equation}\label{inner sum}
\sum\limits_{|\mathbf n|< (1+ m\varkappa)\sqrt\zeta}|\mathbf n+ \mathbf k|^2\big|\psi_\mathbf n(\mathbf k)\big|^2< \big(1+ (m+ 1/6)\varkappa\big)^2\zeta< (1+ \eta/2)^2\zeta.
\end{equation}
For $|\mathbf n|\geqslant (1+ m\varkappa)\sqrt\zeta$ we apply Lemma~\ref{l1} obtaining
\begin{equation}\label{outer sum}
\sum\limits_{|\mathbf n|\geqslant (1+ m\varkappa)\sqrt\zeta}|\mathbf n+ \mathbf k|^2\big|\psi_\mathbf n(\mathbf k)\big|^2\leqslant M_m^2\varkappa^{-2m}\sum\limits_{|\mathbf n|\geqslant (1+ m\varkappa)\sqrt\zeta}|\mathbf n+ \mathbf k|^2\big|\mathbf n|^{-3m- 1}.
\end{equation}
By \eqref{m choice} the \rhs of \eqref{outer sum} is finite and is $O(\zeta^{-1/2})$. Thus, increasing $\zeta_0$ if necessary, by \eqref{2g}, \eqref{inner sum}, and \eqref{outer sum} we obtain \eqref{gradient bound}.
\end{proof}

\section{Proof of Theorem~\ref{maintheorem2}}\label{BS section}

It is enough to prove 

\bet\label{short intervals theorem}
For any $\alpha\in (0, 1)$ there exists $\rho_0> 0$ big enough such that for all $\rho\geqslant \rho_0$
\begin{equation}\label{small interval statement}
N(\rho^2+ \de)- N(\rho^2- \de)\geqslant (1- \alpha)(2\pi)^{-d}\omega_d\de\rho^{d- 2}
\end{equation}
for any
\begin{equation}\label{bound on epsilon}
0 <\de\leqslant \rho^{-d- 3}.
\end{equation}
\ent

Indeed, the original statement of Theorem~\ref{maintheorem2} can be obtained by partitioning of the interval $[\lambda, \lambda+ \varepsilon]$ into subintervals with lengths not exceeding $2\lambda^{(-d- 3)/2}$ and adding up estimates \eqref{small interval statement} on this subintervals (with $\rho^2$ being respective middle points).

\bep
We first express the growth of IDS in terms of the function $f$ of Proposition~\ref{maincorollary1}:
\begin{equation}\label{volume formula}
N(\rho^2+ \de)- N(\rho^2- \de)= (2\pi)^{-d}\volume\big(f^{-1}[\rho^2- \de, \rho^2+ \de)\big).
\end{equation}
We can write
\begin{equation}\label{angular decomposition}
\volume\big(f^{-1}[\rho^2- \de, \rho^2+ \de)\big)= \int_{S^{d- 1}}\int_0^\infty \chi(r, \boldsymbol\xi')r^{d- 1}\rmd r \rmd\boldsymbol\xi',
\end{equation}
where $\chi$ is the indicator function of $f^{-1}\big([\rho^2- \de, \rho^2+ \de)\big)$.
To obtain a lower bound we can restrict the integration in \eqref{angular decomposition} to $\bxi'\in \CF$ defined in Lemma~\ref{good angles lemma}. Then for any $\eta\in (0, 1)$ there exists $\rho_0> 0$ such that for any $\rho\geqslant \rho_0$ we have
\begin{equation}\label{Psi condition implemented}
|\CF|_\circ\geqslant (1- \eta)\omega_d,
\end{equation}
and for any $\bxi'\in \CF$ the support of $\chi(\cdot, \boldsymbol\xi')$ contains an interval $[r_1, r_2]$ with
\begin{equation}\label{all good things}
(1- \eta)\rho\leqslant r_1< r_1+ (1- \eta)\rho^{-1}\delta\leqslant r_2.
\end{equation}
Indeed, the first inequality in \eqref{all good things} follows from Proposition~\ref{maincorollary1}(ii),(iii). The last inequality in \eqref{all good things} follows from Lemmata~\ref{good angles lemma} and \ref{gradient bound lemma}.

Thus for all $\rho\geqslant \rho_0$ by \eqref{Psi condition implemented} and \eqref{all good things} we obtain
\begin{equation}\label{volume calculation}\begin{split}
\int_{S^{d- 1}}\int_0^\infty \chi(r, \boldsymbol\xi')r^{d- 1}\rmd r \rmd\boldsymbol\xi'&\geqslant \int_\CF (1- \eta)^{d}\rho^{d- 2}\delta \rmd\boldsymbol\xi'\\ &\geqslant (1- \eta)^{d+ 1}\omega_d\rho^{d- 2}\delta,
\end{split}\end{equation}
Combining \eqref{volume formula}, \eqref{angular decomposition}, and \eqref{volume calculation}, and choosing $\eta$ small enough we arrive at \eqref{small interval statement}. The theorem is proved.
\enp

\bibliographystyle{amsplain}
\bibliography{IDSref}
\end{document}